\documentclass[aps,prl,twocolumn,superscriptaddress,longbibliography,nofootinbib]{revtex4-2}
\usepackage{amsmath,amssymb,amsthm,bm,mathtools,booktabs,array,float,microtype,graphicx}
\usepackage[hidelinks]{hyperref}

\newcommand{\Tr}{\operatorname{Tr}}
\newcommand{\id}{\mathbb{I}}
\newcommand{\GPO}{\mathrm{GPO}}
\newcommand{\TO}{\mathrm{TO}}
\newcommand{\E}{\mathcal{E}}
\newcommand{\Ppin}{\mathcal{P}}
\newcommand{\orb}{\mathrm{orb}}
\newcommand{\aware}{\mathrm{aware}}
\newcommand{\e}{\mathrm{e}}
\newcommand{\ket}[1]{|#1\rangle}
\newcommand{\proj}[1]{|#1\rangle\!\langle #1|}
\newcommand{\barD}{\overline D}
\newcommand{\sandD}{\widetilde D}
\newtheorem{theorem}{Theorem}
\newtheorem{lemma}[theorem]{Lemma}
\newtheorem{corollary}[theorem]{Corollary}
\newtheorem{proposition}[theorem]{Proposition}
\theoremstyle{remark}
\newcommand{\GPC}{\mathrm{GPC}}
\newcommand{\Gtwirl}{\mathcal G}
\newcommand{\eps}{\varepsilon}

\begin{document}

\title{Reliability Is Not Free in Universal Quantum Work Extraction}
\author{Shuai Zeng}
\email{zengshuai@cqupt.edu.cn}
\affiliation{School of Communication and Information Engineering, Chongqing University of Posts and Telecommunications, Chongqing 400065, China}
\date{July 18, 2026}

\begin{abstract}
Universal work extraction shows that input-state knowledge is unnecessary to attain the asymptotic free-energy rate. We ask whether this first-order universality extends to reliability, the exponential decay rate of extraction failure. In the work-battery fidelity formulation, we prove that no phase-independent Gibbs-preserving protocol can retain the state-aware Gibbs-preserving exponent throughout a coherent qubit time-translation orbit: at every positive target rate, its worst pointwise exponent is bounded by the corresponding state-aware thermal-operation value. The proof first establishes an exact finite-blocklength identity between phase-robust Gibbs-preserving extraction and state-aware thermal extraction. A trigonometric Remez inequality then upgrades this minimax identity to a pointwise theorem by ruling out exponential localization of the worst phase. For an explicit coherent-qubit family, known-phase Gibbs-preserving extraction is error free, whereas every phase-independent protocol has a finite worst-pointwise exponent. Thus input-state knowledge can be irrelevant to the first-order work rate yet indispensable for optimal exponential reliability.
\end{abstract}

\maketitle

Universal work extraction establishes a striking first-order equivalence. For independent copies of a nonequilibrium state $\rho$, the asymptotically extractable dimensionless work is governed by $D(\rho\Vert\tau)$ relative to the Gibbs state $\tau$ \cite{Brandao2013,Lostaglio2019}. A single state-independent thermal protocol can attain this rate for every finite-dimensional i.i.d. source \cite{WatanabeUniversal2026}. Related formulations address work extraction from incompletely characterized sources \cite{Safranek2023,WatanabeBlackBox2024}. State information is therefore unnecessary for attaining the optimal amount of work.

Reliability resolves a distinct layer of performance. At a positive target dimensionless work rate $r$, write the failure probability as $\varepsilon_n\asymp\e^{-nB}$. Subexponential failure and a positive exponent $B$ are indistinguishable at first order but require radically different blocklengths. Recent state-aware results show that reliability separates Gibbs-preserving operations (GPO) from thermal operations (TO), while leaving open whether a single state-independent protocol can retain the corresponding exponent pointwise \cite{WatanabeUniversal2026,WatanabeReliability2026}. In the work-battery fidelity formulation, their optimal exponents are
\begin{align}
B_{\GPO}^{\aware}(\rho;r)
 &=\sup_{0<\alpha<1}\frac{1-\alpha}{\alpha}
 \bigl[\barD_\alpha(\rho\Vert\tau)-r\bigr],\nonumber\\
B_{\TO}^{\aware}(\rho;r)
 &=\sup_{0<\alpha<1}\frac{1-\alpha}{\alpha}
 \bigl[\sandD_\alpha(\rho\Vert\tau)-r\bigr],
\label{eq:aware}
\end{align}
where $\barD_\alpha$ and $\sandD_\alpha$ are the Petz and sandwiched R\'enyi relative entropies. For coherence between distinct energy eigenspaces, the first exponent can strictly exceed the second \cite{WatanabeReliability2026}, sharpening the established GPO--TO distinction in coherent thermodynamics \cite{Faist2015,LostaglioCoherence2015}. General GPO implementations can moreover require unbounded coherence \cite{TajimaTakagi2025}. The unresolved question is whether a state-independent GPO can retain that advantage.

We resolve this question on a complete one-parameter coherent orbit. Let $\boldsymbol\Lambda=\{(\Lambda_n,w_n)\}$ be any phase-independent GPO protocol sequence extracting at asymptotic rate $\liminf_n w_n/n\ge r>0$. For a complete qubit time-translation orbit $\mathcal O_\rho=\{\rho_\theta\}_{\theta\in[0,2\pi)}$, it satisfies
\begin{equation}
\inf_{\theta}\, b_{\boldsymbol\Lambda}(\rho_\theta;r)
\le B_{\TO}^{\aware}(\rho;r).
\label{eq:main-nogo}
\end{equation}
The bound applies to the asymptotic reliability profile of each fixed phase, rather than only to a phase selected separately at each blocklength. This distinction is essential because the phase maximizing the error may vary with $n$. Because an orbit-aware protocol may use knowledge of $\mathcal O_\rho$, Eq.~\eqref{eq:main-nogo} applies immediately to fully universal protocols. Figure~\ref{fig:main} displays the finite reliability branches and the zero-error separation window.

\begin{figure}[t]
\centering
\includegraphics[width=\columnwidth]{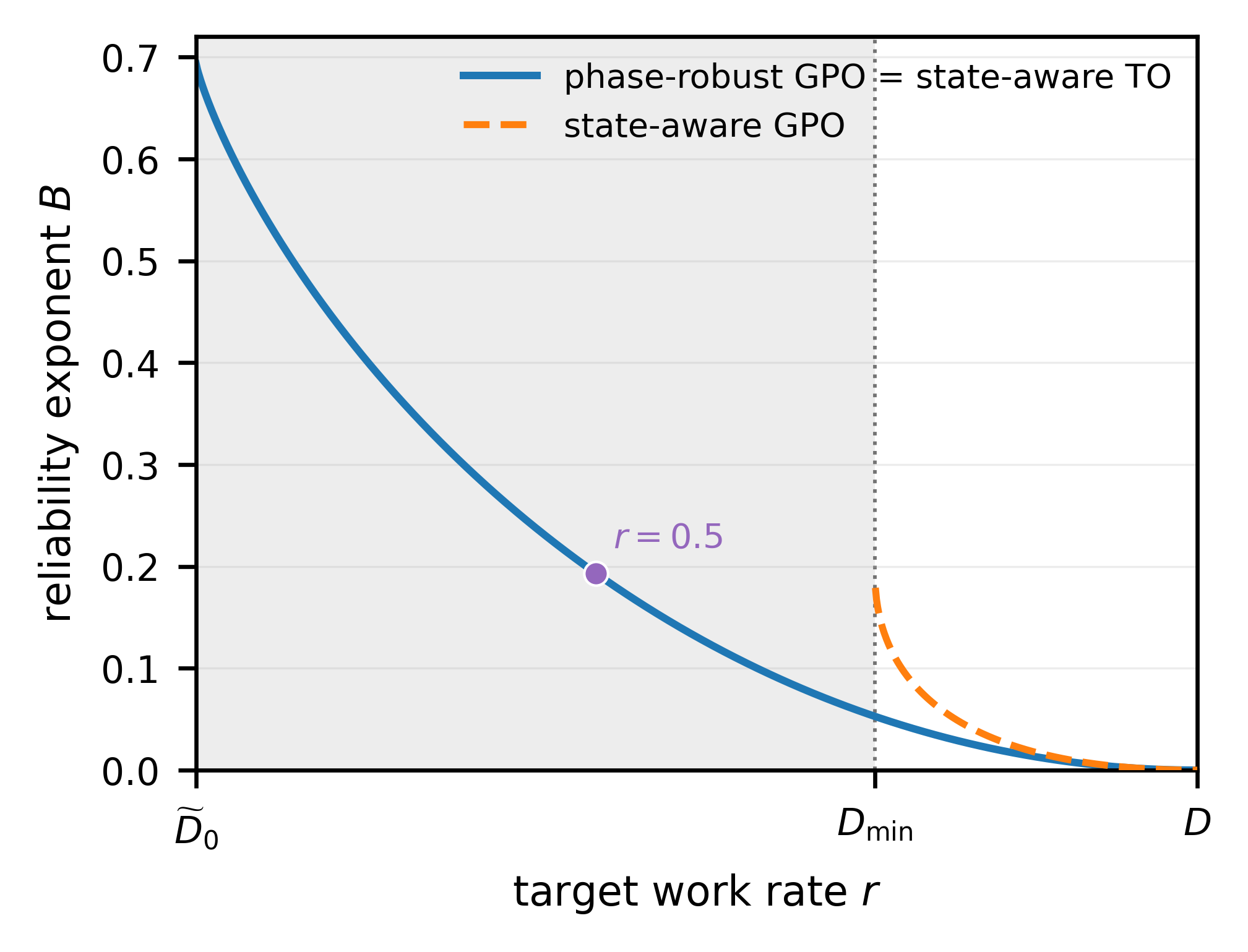}
\caption{Finite reliability exponents for $\tau=\mathrm{diag}(0.8,0.2)$ and $\ket{\psi_\theta}=(\ket0+\e^{i\theta}\ket1)/\sqrt2$. The solid curve is the common phase-robust-GPO and state-aware-TO exponent, and the dashed curve is the state-aware-GPO exponent. In the shaded interval $\widetilde D_0<r<D_{\min}$, state-aware GPO extraction has zero error while the solid branch is finite. For $r<\widetilde D_0$, all three tasks have zero error and lie outside the plotted finite-exponent domain. Both finite branches vanish at the common first-order rate $D$. The purple point marks the worked example $r=0.5$, analyzed below in Eqs.~\eqref{eq:example}--\eqref{eq:infinite-finite}.}
\label{fig:main}
\end{figure}

\emph{Operational setting.---}
Let the nondegenerate qubit Hamiltonian be affinely rescaled so that
\begin{equation}
K=\proj1,\qquad
U_\theta=\e^{-i\theta K}=\proj0+\e^{-i\theta}\proj1,
\label{eq:orbit-unitary}
\end{equation}
and define $\rho_\theta=U_\theta\rho U_\theta^\dagger$. All orbit states have identical spectra, energy populations, free energies, and state-aware reliabilities; only the phase of energetic coherence is unknown. The protocol may use arbitrary fixed auxiliaries and estimate $\theta$ from the input copies, while receiving no side information correlated with the realized phase.

We use the two-level work battery of Ref.~\cite{WatanabeReliability2026}. Extracting dimensionless work $w=\log m$ means converting the input into the excited battery state, whose Gibbs probability is $m^{-1}=\e^{-w}$. A GPO protocol designed to extract $w_n$ at blocklength $n$ induces the success effect
\begin{equation}
\begin{gathered}
M_n=\Lambda_n^\dagger(\proj1),\qquad 0\le M_n\le\id,\\
\Tr(\tau^{\otimes n}M_n)\le\e^{-w_n},
\end{gathered}
\label{eq:effect}
\end{equation}
with phase-dependent error
\begin{equation}
\varepsilon_n(\theta)=1-\Tr(\rho_\theta^{\otimes n}M_n).
\label{eq:error-profile}
\end{equation}
Every effect satisfying Eq.~\eqref{eq:effect} can be saturated to equality and realized by a GPO measure-and-prepare channel. Complete definitions and proofs are given in the Supplemental Material below.

\emph{Exact orbit collapse.---}
At fixed $n$ and $w$, define
\begin{equation}
\E_{\GPO,n}^{\orb}(\mathcal O_\rho;w)
=\inf_{M_n}\sup_\theta\varepsilon_n(\theta).
\label{eq:orbit-error}
\end{equation}
Our first result is the finite-blocklength identity
\begin{equation}
\E_{\GPO,n}^{\orb}(\mathcal O_\rho;w)
=\E_{\TO}(\rho^{\otimes n};w).
\label{eq:collapse}
\end{equation}
for every qubit state, every $n$, and every admissible $w$.

To prove Eq.~\eqref{eq:collapse}, Haar-average an arbitrary feasible effect,
\begin{equation}
\overline M_n=\int_0^{2\pi}\frac{d\phi}{2\pi}
U_\phi^{\otimes n}M_nU_\phi^{\otimes n\dagger}.
\label{eq:effect-twirl}
\end{equation}
Because $\tau^{\otimes n}$ is invariant, feasibility is preserved. The success probability of $\overline M_n$ is phase independent and equals the orbit average of the original success probability; hence its worst-phase performance is at least as good. For qubits, the $U(1)$ twirl is exactly total-energy pinching,
\begin{equation}
\int\frac{d\phi}{2\pi}
U_\phi^{\otimes n}XU_\phi^{\otimes n\dagger}
=\Ppin_{\tau^{\otimes n}}(X).
\label{eq:pinching}
\end{equation}
The orbit problem is therefore exactly GPO extraction from the pinched state, a symmetry-restricted hypothesis-testing problem \cite{HiaiMosonyiHayashi2009}. The one-shot identity of Ref.~\cite{WatanabeReliability2026} equates this quantity with TO extraction from the original state. Taking error exponents gives
\begin{equation}
B_{\GPO}^{\orb}(\mathcal O_\rho;r)
=B_{\TO}^{\aware}(\rho;r).
\label{eq:robust-collapse}
\end{equation}

The pointwise and minimax criteria differ in the order of phase optimization and the asymptotic limit. A fixed sequence is evaluated at each fixed phase through
\begin{equation}
b_{\boldsymbol\Lambda}(\rho_\theta;r)
=\liminf_{n\to\infty}-\frac1n\log\varepsilon_n(\theta),
\label{eq:pointwise}
\end{equation}
whereas the robust exponent places $\sup_\theta$ inside the limit. Consequently, blocklength-dependent localization of the worst-case phase must be controlled before the minimax result can imply a pointwise statement.

\emph{Remez uniformization.---}
Expanding Eq.~\eqref{eq:error-profile} in the excitation basis shows that
\begin{equation}
\varepsilon_n(\theta)=\sum_{k=-n}^{n}c_{n,k}\e^{ik\theta}
\label{eq:fourier}
\end{equation}
is a real nonnegative trigonometric polynomial of degree at most $n$. Suppose every fixed phase had exponent strictly larger than some $b$, and define
\begin{equation}
G_n=\{\theta:\varepsilon_n(\theta)\le\e^{-nb}\}.
\label{eq:good-set}
\end{equation}
Dominated convergence gives $|G_n|\to2\pi$. Writing $s_n=2\pi-|G_n|\to0$, the trigonometric Remez bound \cite{Erdelyi2018} gives
\begin{equation}
\sup_\theta\varepsilon_n(\theta)
\le\e^{-nb}T_{2n}\!\left(\sec\frac{s_n}{2}\right),
\label{eq:remez}
\end{equation}
and
\begin{equation}
\frac1n\log T_{2n}\!\left(\sec\frac{s_n}{2}\right)
\le2\operatorname{arcosh}\!\left(\sec\frac{s_n}{2}\right)\longrightarrow0.
\label{eq:remez-penalty}
\end{equation}
Thus a pointwise advantage shared by all fixed phases would force the same robust advantage. If the left-hand side of Eq.~\eqref{eq:main-nogo} exceeded $B_{\TO}^{\aware}(\rho;r)$, choosing $b$ strictly between them and using Eqs.~\eqref{eq:remez}--\eqref{eq:remez-penalty} would contradict Eq.~\eqref{eq:robust-collapse}. In particular, whenever
\begin{equation}
B_{\GPO}^{\aware}(\rho;r)>B_{\TO}^{\aware}(\rho;r),
\label{eq:strict-gap}
\end{equation}
no phase-independent GPO protocol can attain the state-aware GPO reliability at every fixed phase.

\emph{Explicit pure-qubit separation.---}
Take
\begin{equation}
\tau=\operatorname{diag}(0.8,0.2),\qquad
\ket{\psi_\theta}=\frac{\ket0+\e^{i\theta}\ket1}{\sqrt2},\qquad r=0.5.
\label{eq:example}
\end{equation}
For known $\theta$, the projector onto $\ket{\psi_\theta}^{\otimes n}$ succeeds with probability one and has Gibbs cost
\begin{equation}
\langle\psi|\tau|\psi\rangle^n=\e^{-nD_{\min}},
\qquad D_{\min}=\log2>r.
\label{eq:gpo-threshold}
\end{equation}
It can be completed to a valid GPO effect while preserving unit success, so
\begin{equation}
B_{\GPO}^{\aware}(\psi;0.5)=+\infty.
\label{eq:gpo-infinite}
\end{equation}
For TO, total-energy pinching reduces the finite-blocklength problem to a Neyman--Pearson test over Dicke sectors. Its exact fractional-knapsack solution and large-deviation limit give
\begin{equation}
B_{\TO}^{\aware}(\psi;r)=D(q_r\Vert1/2),\qquad
q_r=\frac{r+\log0.8}{\log4},
\label{eq:closed-example}
\end{equation}
for $-\log0.8<r<D(\psi\Vert\tau)$. At $r=0.5$,
\begin{align}
q_r&=0.1997097128\ldots,\nonumber\\
D(q_r\Vert1/2)&=0.19314744399\ldots.
\label{eq:example-value}
\end{align}
Combining Eqs.~\eqref{eq:robust-collapse} and \eqref{eq:closed-example},
\begin{equation}
\begin{aligned}
B_{\GPO}^{\aware}&=+\infty,\\
B_{\GPO}^{\orb}=B_{\TO}^{\aware}&=0.19314744399\ldots .
\end{aligned}
\label{eq:infinite-finite}
\end{equation}

The separation is not tied to the pure-state zero-error threshold. For every full-rank qubit state with $[\rho,\tau]\ne0$, both state-aware exponents are finite and satisfy
\begin{equation}
B_{\GPO}^{\aware}(\rho;r)>B_{\TO}^{\aware}(\rho;r)>0,
\qquad 0<r<D(\rho\Vert\tau).
\label{eq:mixed-gap}
\end{equation}
Equation~\eqref{eq:main-nogo} therefore imposes a finite but strict universal reliability loss throughout this regime. The same argument extends to finite-dimensional integer-charge $U(1)$ representations whenever the group twirl coincides with thermal pinching: if $\Delta k$ is the single-copy charge width, the $n$-copy error profile has degree at most $n\Delta k$, and the Remez penalty remains subexponential.

The first-order target rate remains unchanged, while exponential precision drops. Set-valued work extraction characterizes minimax exponents under source uncertainty \cite{WatanabeBlackBox2024,ZhangFang2026}, and deterministic coherence no-go theorems address single-shot or constructor-theoretic extraction \cite{PlesnikViolaris2024}. The finite-blocklength identity and Remez uniformization established here reveal a pointwise exponential obstruction: whenever the state-aware GPO--TO gap is strict, every universal protocol loses the known-state GPO exponent on a fixed phase. A universal protocol may reproduce how much work is available without reproducing how reliably it can be extracted. State information is a reliability resource invisible to first-order thermodynamics. More broadly, finite symmetry bandwidth can turn robust boundaries into pointwise impossibility theorems in universal quantum tasks.

\emph{Data availability.---} All supporting numerical values are contained in the Letter and Supplemental Material.

\bibliography{references}

\clearpage
\onecolumngrid
\begingroup
\centering
{\large\bfseries Supplemental Material for\\
``Reliability Is Not Free in Universal Quantum Work Extraction''\par}
\vspace{0.7em}
{Shuai Zeng\par}
{\small School of Communication and Information Engineering, Chongqing University of Posts and Telecommunications, Chongqing 400065, China\par}
{\small zengshuai@cqupt.edu.cn\par}
\endgroup
\vspace{1em}

\setcounter{section}{0}
\setcounter{subsection}{0}
\setcounter{equation}{0}
\setcounter{figure}{0}
\setcounter{table}{0}
\setcounter{theorem}{0}
\renewcommand{\thesection}{S\arabic{section}}
\renewcommand{\thesubsection}{\thesection.\arabic{subsection}}
\renewcommand{\theequation}{S\arabic{equation}}
\renewcommand{\thefigure}{S\arabic{figure}}
\renewcommand{\thetable}{S\arabic{table}}
\renewcommand{\thetheorem}{S\arabic{theorem}}
\renewcommand{\theHequation}{S.\arabic{equation}}
\renewcommand{\theHsection}{S.\arabic{section}}
\renewcommand{\theHsubsection}{S.\arabic{section}.\arabic{subsection}}
\renewcommand{\theHfigure}{S.\arabic{figure}}
\renewcommand{\theHtable}{S.\arabic{table}}
\setcounter{tocdepth}{2}
\tableofcontents
\vspace{0.5em}

\section{Overview}
This Supplemental Material supplies the operational definitions and complete derivations underlying the Letter. It first establishes the hypothesis-testing representation of GPO extraction and defines the state-aware, pointwise, and robust reliability quantities. It then proves the exact finite-blocklength equality between phase-robust GPO extraction and state-aware TO extraction, derives the Fourier bandwidth of the phase-dependent error, and applies a trigonometric Remez inequality to obtain the pointwise no-go theorem. Subsequent sections treat mixed coherent states, the pure-qubit analytic example, finite-blocklength values, and a conditional integer-charge extension.

\section{Operational setting}
\subsection{Thermal state and free operations}
Let $H$ be a finite-dimensional Hamiltonian and $\beta>0$. The Gibbs state is
\begin{equation}
\tau=\frac{\e^{-\beta H}}{Z},\qquad Z=\Tr\e^{-\beta H}.
\end{equation}
A channel is Gibbs preserving if it maps the input Gibbs state to the output Gibbs state. Thermal operations are realized, up to closure, by adjoining a Gibbs ancilla, applying an energy-conserving unitary, and discarding a subsystem. Gibbs-preserving covariant operations impose Gibbs preservation and time-translation covariance directly \cite{Faist2015,LostaglioCoherence2015,TajimaTakagi2025}. Thus
\begin{equation}
\TO\subseteq\GPC\subseteq\GPO.
\end{equation}
All logarithms are natural. For $n$ noninteracting copies, $\tau_n=\tau^{\otimes n}$.

\subsection{Work battery and error convention}
For $m\ge1$, the two-level work battery has Gibbs state
\begin{equation}
\mu_m=\frac{m-1}{m}\proj0+\frac1m\proj1,
\qquad w=\beta W=\log m.
\end{equation}
The target output is $\proj1$. Since the target is pure, squared fidelity equals its excited-state population. Hence the one-shot error is a failure probability,
\begin{equation}
\E_{\mathbb O}(\rho;w)
=1-\sup_{\Lambda\in\mathbb O}F(\Lambda(\rho),\proj1).
\end{equation}
The reliability statements below use this work-battery and fidelity-error formulation.

\section{GPO extraction as hypothesis testing}
\subsection{Success effects}
A GPO protocol induces the Heisenberg-picture success effect
\begin{equation}
M=\Lambda^\dagger(\proj1),\qquad 0\le M\le\id,
\end{equation}
with
\begin{equation}
\Tr(\tau M)=m^{-1}=\e^{-w},
\qquad p_{\rm succ}(\rho,M)=\Tr(\rho M).
\end{equation}
Conversely, any effect satisfying this equality defines the measure-and-prepare GPO
\begin{equation}
\Lambda_M(\omega)=\Tr(M\omega)\proj1+
\Tr[(\id-M)\omega]\proj0.
\end{equation}
Therefore the GPO optimization is exactly a binary asymmetric hypothesis test with a Gibbs type-II budget.

\subsection{Saturating the Gibbs budget}
Hypothesis testing naturally gives $\Tr(\tau M)\le m^{-1}$. The inequality can always be saturated without lowering success. Define
\begin{equation}
\lambda=\frac{m^{-1}-\Tr(\tau M)}{1-\Tr(\tau M)},
\qquad M'=M+\lambda(\id-M).
\end{equation}
Then $0\le M'\le\id$, $\Tr(\tau M')=m^{-1}$, and $M'\ge M$.

\subsection{Imported one-shot identities}
The hypothesis-testing divergence is
\begin{equation}
D_H^\eps(\rho\Vert\sigma)
=-\log\inf\{\Tr(\sigma M):0\le M\le\id,
\Tr[\rho(\id-M)]\le\eps\}.
\end{equation}
Let $\Ppin_\tau$ denote pinching in the distinct eigenspaces of $\tau$. The one-shot identities used in the proof are \cite{WatanabeReliability2026}
\begin{align}
\E_{\GPO}(\rho;w)
&=\min\{\eps:w\le D_H^\eps(\rho\Vert\tau)\},\\
\E_{\GPC}(\rho;w)=\E_{\TO}(\rho;w)
&=\min\{\eps:w\le D_H^\eps(\Ppin_\tau(\rho)\Vert\tau)\}.
\end{align}

\section{Reliability notions}
\subsection{State-aware reliability}
For $\mathbb O\in\{\GPO,\TO\}$, a fixed state $\rho$, and $r>0$, define
\begin{equation}
B_{\mathbb O}^{\aware}(\rho;r)=
\sup_{\{w_n\}}
\left\{\liminf_{n\to\infty}-\frac1n\log
\E_{\mathbb O}(\rho^{\otimes n};w_n):
\liminf_n\frac{w_n}{n}\ge r\right\}.
\end{equation}
Existing results give \cite{WatanabeReliability2026}
\begin{align}
B_{\GPO}^{\aware}(\rho;r)
&=\sup_{0<\alpha<1}\frac{1-\alpha}{\alpha}
[\barD_\alpha(\rho\Vert\tau)-r],\\
B_{\TO}^{\aware}(\rho;r)
&=\sup_{0<\alpha<1}\frac{1-\alpha}{\alpha}
[\sandD_\alpha(\rho\Vert\tau)-r].
\end{align}

\subsection{Qubit phase orbit}
After an affine rescaling of the nondegenerate qubit Hamiltonian,
\begin{equation}
K=\proj1,\qquad U_\theta=\proj0+\e^{-i\theta}\proj1,
\qquad \rho_\theta=U_\theta\rho U_\theta^\dagger.
\end{equation}
The candidate set is the complete orbit
\begin{equation}
\mathcal O_\rho=\{\rho_\theta:\theta\in[0,2\pi)\}.
\end{equation}
The protocol may use the known orbit, arbitrary fixed auxiliaries, and internal phase estimation, while receiving no side information correlated with the realized phase.

For $n$ copies, Haar twirling gives
\begin{equation}
\Gtwirl_n(X)=\int_0^{2\pi}\frac{d\phi}{2\pi}
U_\phi^{\otimes n}XU_\phi^{\otimes n\dagger}
=\sum_{N=0}^{n}\Pi_NX\Pi_N
=\Ppin_{\tau_n}(X),
\end{equation}
where $\Pi_N$ projects onto the total-$N$ excitation subspace. This is the symmetry-restricted hypothesis-testing structure associated with the $U(1)$ action \cite{HiaiMosonyiHayashi2009}.

\subsection{Pointwise and robust quantities}
Let a phase-independent protocol sequence consist of effects $M_n$ satisfying
\begin{equation}
0\le M_n\le\id,\qquad
\Tr(\tau_nM_n)\le\e^{-w_n},
\qquad \liminf_n\frac{w_n}{n}\ge r>0.
\end{equation}
Its error profile and fixed-phase exponent are
\begin{equation}
\eps_n(\theta)=1-\Tr(\rho_\theta^{\otimes n}M_n),
\qquad
b(\rho_\theta)=\liminf_n-\frac1n\log\eps_n(\theta).
\end{equation}
The finite-blocklength orbit error and robust exponent are
\begin{align}
\E_{\GPO,n}^{\orb}(\mathcal O_\rho;w_n)
&=\inf_{M_n}\sup_\theta\eps_n(\theta),\\
B_{\GPO}^{\orb}(\mathcal O_\rho;r)
&=\sup\left\{\liminf_n-\frac1n\log\sup_\theta\eps_n(\theta):
\liminf_n\frac{w_n}{n}\ge r\right\}.
\end{align}
The pointwise and robust exponents have different orders of phase optimization and $\liminf$, which necessitates the uniformization argument below.

\section{Exact finite-blocklength orbit collapse}
\begin{theorem}[Exact orbit collapse]
For every qubit state $\rho$, every $n$, and every target work $w_n$,
\begin{equation}
\E_{\GPO,n}^{\orb}(\mathcal O_\rho;w_n)
=\E_{\TO}(\rho^{\otimes n};w_n).
\end{equation}
\end{theorem}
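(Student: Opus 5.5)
We prove the identity by establishing the two inequalities separately, working entirely with success effects. By the hypothesis-testing representation above, the orbit error is
\[
\E_{\GPO,n}^{\orb}(\mathcal O_\rho;w_n)=\inf_{M_n}\sup_\theta\eps_n(\theta),
\]
where the infimum runs over all $0\le M_n\le\id$ with $\Tr(\tau_nM_n)\le\e^{-w_n}$. The imported one-shot identity, applied to the $n$-copy system with $\tau_n$, gives
\[
\E_{\TO}(\rho^{\otimes n};w_n)=\min\{\eps:w_n\le D_H^\eps(\Ppin_{\tau_n}(\rho^{\otimes n})\Vert\tau_n)\}.
\]
That is, it equals the optimal GPO error for the pinched state $\Ppin_{\tau_n}(\rho^{\otimes n})$. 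One point needs checking here: $\Ppin_{\tau_n}$ pinches in the eigenspaces of $\tau^{\otimes n}$. Because the qubit Hamiltonian is nondegenerate and the copies are noninteracting, these eigenspaces are exactly the total-excitation subspaces $\Pi_N$. So $\Ppin_{\tau_n}=\Gtwirl_n$, as already recorded above.

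\emph{Upper bound on the orbit error.} Take an optimal test $M^\star$ for the pinched problem, so that $\Tr(\tau_nM^\star)\le\e^{-w_n}$ and $1-\Tr(\Ppin_{\tau_n}(\rho^{\otimes n})M^\star)=\E_{\TO}(\rho^{\otimes n};w_n)$. Replace it by $\overline{M}=\Gtwirl_n(M^\star)$. Since $\tau_n$ commutes with every $U_\phi^{\otimes n}$, this keeps $0\le\overline M\le\id$ and the Gibbs cost unchanged. Because $\Gtwirl_n$ is self-adjoint with respect to the Hilbert--Schmidt inner product,
\[
\Tr(\rho_\theta^{\otimes n}\overline M)=\Tr\bigl(\Gtwirl_n(U_\theta^{\otimes n}\rho^{\otimes n}U_\theta^{\otimes n\dagger})M^\star\bigr).
\]
Twirl invariance under translation of $\phi$ gives $\Gtwirl_n(U_\theta^{\otimes n}XU_\theta^{\otimes n\dagger})=\Gtwirl_n(X)$. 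Hence the success probability of $\overline M$ equals $\Tr(\Ppin_{\tau_n}(\rho^{\otimes n})M^\star)$ for every $\theta$, and $\sup_\theta\eps_n(\theta)=\E_{\TO}(\rho^{\otimes n};w_n)$.

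\emph{Lower bound on the orbit error.} Let $M_n$ be any feasible effect. The supremum dominates the average, so
\[
\sup_\theta\eps_n(\theta)\ge\int\frac{d\theta}{2\pi}\eps_n(\theta)=1-\Tr\bigl(\Gtwirl_n(\rho^{\otimes n})M_n\bigr)=1-\Tr\bigl(\Ppin_{\tau_n}(\rho^{\otimes n})M_n\bigr).
\]
The middle equality uses linearity and $\rho_\theta^{\otimes n}=U_\theta^{\otimes n}\rho^{\otimes n}U_\theta^{\otimes n\dagger}$. Since $M_n$ is feasible for the pinched test, the right-hand side is at least the optimal pinched error, which is $\E_{\TO}(\rho^{\otimes n};w_n)$. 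Taking the infimum over $M_n$ closes the chain of inequalities.

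\emph{Remaining care points.} Neither step involves a substantive obstacle; the real work is hidden in the imported TO/pinching identity. The only delicate items are the following.
\begin{itemize}
\item The pinching in that identity must be the full $n$-copy energy pinching, not the tensor power of single-copy pinchings. This requires identifying the eigenspaces of $\tau_n$ with the $\Pi_N$.
\item Inequality versus equality in the Gibbs budget. This is harmless by the saturation lemma, since raising $M$ to $M'\ge M$ only lowers the error.
\item Attainment of the minimum in the one-shot formula. This holds by compactness of the feasible effect set.
\end{itemize}
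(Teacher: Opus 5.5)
Your proof is correct. Your lower bound is the paper's argument in compressed form. The paper twirls the effect and observes that the worst-phase error of $\overline M_n$ equals the orbit-averaged error of $M_n$. You average the state instead, which is the same computation read through the self-adjointness of $\Gtwirl_n$.

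The achievability direction genuinely differs. The paper takes a thermal operation optimal for one orbit representative. Because thermal operations are time-translation covariant and the battery success projector is an energy eigenprojector, its success probability is the same for every $\theta$. Since $\TO\subseteq\GPO$, this gives $\E_{\GPO,n}^{\orb}\le\E_{\TO}$ directly, without using the imported one-shot identity in that direction.

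You instead twirl an optimal test for the pinched problem, which yields a phase-invariant GPO effect whose error equals the pinched-state GPO error. Your argument thus establishes the exact intermediate equality $\E_{\GPO,n}^{\orb}(\mathcal O_\rho;w_n)=\E_{\GPO}(\Ppin_{\tau_n}(\rho^{\otimes n});w_n)$ entirely within the effect picture, with attainment guaranteed by compactness. You then invoke the imported identity once, in both directions, at the end.

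Each route buys something different. Yours sidesteps any question of whether an optimal thermal operation exists, since thermal operations are defined only up to closure. It also isolates precisely where the external one-shot input enters. The paper's route has the operational advantage of exhibiting an actual thermal protocol that is uniformly optimal over the whole orbit.
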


\begin{proof}
Let $M_n$ be any feasible GPO success effect and define its Haar average
\begin{equation}
\overline M_n=\int_0^{2\pi}\frac{d\phi}{2\pi}
U_\phi^{\otimes n}M_nU_\phi^{\otimes n\dagger}.
\end{equation}
Because $\tau_n$ is invariant, $\Tr(\tau_n\overline M_n)=\Tr(\tau_nM_n)$, so feasibility is preserved. For every $\theta$,
\begin{align}
\Tr(\rho_\theta^{\otimes n}\overline M_n)
&=\int_0^{2\pi}\frac{d\phi}{2\pi}
\Tr(\rho_{\theta-\phi}^{\otimes n}M_n).
\end{align}
The right-hand side is independent of $\theta$ and equals the orbit-average success probability of $M_n$. Thus
\begin{equation}
\sup_\theta\eps_n^{\overline M}(\theta)
=\int_0^{2\pi}\frac{d\phi}{2\pi}\eps_n^M(\phi)
\le\sup_\theta\eps_n^M(\theta).
\end{equation}
The minimax optimization may therefore be restricted to invariant effects.

For invariant $\overline M_n$,
\begin{equation}
\Tr(\rho_\theta^{\otimes n}\overline M_n)
=\Tr[\Ppin_{\tau_n}(\rho^{\otimes n})\overline M_n].
\end{equation}
Optimizing invariant tests is exactly GPO extraction from the pinched state. The one-shot identities above equate this quantity with TO extraction from the original state, giving one direction.

For the converse direction, choose a TO protocol optimal for one orbit representative. Thermal operations are time-translation covariant, while the battery success projector is an energy eigenprojector. Consequently the same protocol has identical success probability for every $\theta$. Since every TO is a GPO, the TO error is achievable uniformly over the orbit. Combining both directions proves the equality.
\end{proof}

\begin{corollary}[Robust reliability]
For every $r>0$,
\begin{equation}
B_{\GPO}^{\orb}(\mathcal O_\rho;r)
=B_{\TO}^{\aware}(\rho;r).
\end{equation}
\end{corollary}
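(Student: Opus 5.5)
The corollary follows by taking exponents in the exact orbit collapse, blocklength by blocklength, and matching the two definitions term by term.

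First, fix any admissible rate schedule $\{w_n\}$ with $\liminf_n w_n/n\ge r$. For any sequence of feasible effects $M_n$, the quantity $\sup_\theta\eps_n(\theta)$ is bounded below by the infimum over feasible $M_n$. That infimum is $\E_{\GPO,n}^{\orb}(\mathcal O_\rho;w_n)$, which equals $\E_{\TO}(\rho^{\otimes n};w_n)$ by the exact orbit collapse theorem. Hence
\begin{equation}
\liminf_n-\frac1n\log\sup_\theta\eps_n(\theta)\le\liminf_n-\frac1n\log\E_{\TO}(\rho^{\otimes n};w_n).
\end{equation}
Taking the supremum over $\{w_n\}$ and over protocol sequences gives $B_{\GPO}^{\orb}(\mathcal O_\rho;r)\le B_{\TO}^{\aware}(\rho;r)$.

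Conversely, fix $\{w_n\}$ and a small $\delta>0$. For each $n$, choose a feasible effect $M_n$ whose worst-phase error is within a factor $(1+\delta)$ of the infimum; if the infimum is zero, choose it exactly. Two facts make this choice available.
\begin{itemize}
\item The infimum is attained. After Haar twirling the problem is a finite-dimensional linear program over a compact set, and the converse part of the orbit collapse proof even provides a uniform TO-induced protocol.
\item The minimizer can be chosen attaining $\E_{\TO}(\rho^{\otimes n};w_n)$ exactly, uniformly in $\theta$.
\end{itemize}
This sequence then has $\liminf_n-\frac1n\log\sup_\theta\eps_n=\liminf_n-\frac1n\log\E_{\TO}(\rho^{\otimes n};w_n)$. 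Taking the supremum over $\{w_n\}$ yields the reverse inequality.

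The only points needing care are bookkeeping.
\begin{enumerate}
\item The definitions must be consistent on feasibility. The robust exponent uses the inequality constraint $\Tr(\tau_nM_n)\le\e^{-w_n}$, while the imported TO one-shot identity is stated for the saturated budget. The saturation lemma $M'=M+\lambda(\id-M)$ shows the two give the same optimal error, so the collapse theorem applies to the inequality-constrained orbit problem.
\item Zero-error blocklengths must be handled. If $\E_{\TO}(\rho^{\otimes n};w_n)=0$, both sides equal $+\infty$ in the $-\log$ sense, so the term-by-term identity persists under the convention $-\log0=+\infty$.
\end{enumerate}

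I expect no genuine obstacle: the identity is exact at every $n$, so the two $\liminf$ expressions agree sequence by sequence before any supremum is taken.
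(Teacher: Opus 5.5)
Your proposal is correct and follows the paper's route: the corollary comes from taking $-\frac1n\log$ of the exact finite-blocklength identity $\E_{\GPO,n}^{\orb}(\mathcal O_\rho;w_n)=\E_{\TO}(\rho^{\otimes n};w_n)$ for each fixed schedule $\{w_n\}$ and then optimizing over schedules, and your remarks on attainment, budget saturation, and zero-error blocklengths fill in bookkeeping the paper leaves implicit. One small wording point: the twirled problem is a semidefinite program rather than a linear program, but attainment still follows from compactness of the feasible effect set.
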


\section{Fourier bandwidth}
\begin{lemma}[Linear Fourier bandwidth]
For every effect $M_n$,
\begin{equation}
\eps_n(\theta)=\sum_{k=-n}^{n}c_{n,k}\e^{ik\theta}
\end{equation}
is a real nonnegative trigonometric polynomial of degree at most $n$.
\end{lemma}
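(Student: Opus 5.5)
The plan is to compute $\Tr(\rho_\theta^{\otimes n}M_n)$ directly in the computational (excitation) basis of $(\mathbb C^2)^{\otimes n}$. Label basis vectors by strings $x\in\{0,1\}^n$ and write $|x|$ for the Hamming weight, i.e.\ the number of excitations. Since $U_\theta=\proj0+\e^{-i\theta}\proj1$, we have
\begin{equation}
U_\theta^{\otimes n}\ket{x}=\e^{-i|x|\theta}\ket{x}.
\end{equation}
Consequently
\begin{equation}
\langle x|\rho_\theta^{\otimes n}|y\rangle=\e^{-i(|x|-|y|)\theta}\langle x|\rho^{\otimes n}|y\rangle .
\end{equation}
Expanding the trace then gives
\begin{equation}
\Tr(\rho_\theta^{\otimes n}M_n)=\sum_{x,y}\e^{-i(|x|-|y|)\theta}\langle x|\rho^{\otimes n}|y\rangle\langle y|M_n|x\rangle .
\end{equation}

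The exponents $|x|-|y|$ range over $\{-n,\dots,n\}$. Grouping terms by $k=|y|-|x|$ gives coefficients
\begin{equation}
c'_{n,k}=\sum_{|y|-|x|=k}\langle x|\rho^{\otimes n}|y\rangle\langle y|M_n|x\rangle,
\end{equation}
which we may equivalently write as $c'_{n,k}=\Tr(\Pi_{N}\rho^{\otimes n}\Pi_{N+k}M_n)$ summed over $N$, using the excitation projectors $\Pi_N$. Hence
\begin{equation}
\eps_n(\theta)=1-\sum_k c'_{n,k}\e^{ik\theta}.
\end{equation}
The constant $1$ is absorbed into $c_{n,0}$, so $c_{n,0}=1-c'_{n,0}$ and $c_{n,k}=-c'_{n,k}$ for $k\ne0$. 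This exhibits $\eps_n$ as a trigonometric polynomial of degree at most $n$.

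For reality, we use $\eps_n(\theta)=1-\Tr(\rho_\theta^{\otimes n}M_n)$. The trace of a product of two Hermitian operators is real, because $\overline{\Tr(AB)}=\Tr(B^\dagger A^\dagger)=\Tr(AB)$; we can also see this from the coefficient symmetry $\overline{c_{n,k}}=c_{n,-k}$. For nonnegativity, $0\le M_n\le\id$ and $\rho_\theta^{\otimes n}\ge0$ with unit trace imply $0\le\Tr(\rho_\theta^{\otimes n}M_n)\le1$. Thus $0\le\eps_n(\theta)\le1$ for every $\theta$.

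There is no real obstacle here. The only points needing care are to track the sign convention of the exponent, and to note that the degree bound follows because the weight difference between the two halves of the $n$-copy system is at most $n$. The same bookkeeping gives the degree $n\Delta k$ for the integer-charge extension.
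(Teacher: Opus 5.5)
Your proof is correct and takes the same approach as the paper. You expand in the computational basis, observe that the $(x,y)$ matrix element acquires the phase $\e^{-i(|x|-|y|)\theta}$ with $|x|-|y|\in[-n,n]$, and read off the degree bound. You also spell out reality and nonnegativity, which the paper leaves implicit. The paper's extra remark, that adaptive, randomized, or ancilla-assisted implementations reduce to a single effect $M_n$, is not needed for the lemma as stated.
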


\begin{proof}
In the computational basis, a matrix element between strings $x$ and $y$ acquires phase $\e^{-i(|x|-|y|)\theta}$. The Hamming-weight difference lies in $[-n,n]$, so no Fourier mode outside that range occurs. Any collective, adaptive, randomized, or ancilla-assisted implementation reduces, for the final binary success event, to one effect $M_n$ and therefore obeys the same bound.
\end{proof}

\section{Remez uniformization}
\begin{theorem}[Trigonometric Remez bound] \cite{Erdelyi2018}
Let $Q$ be a trigonometric polynomial of degree at most $N$. If $|Q|\le1$ on a set of measure at least $2\pi-s$, with $0<s<\pi$, then
\begin{equation}
\max_\theta|Q(\theta)|\le T_{2N}\!\left(\sec\frac{s}{2}\right),
\end{equation}
where $T_k$ is the Chebyshev polynomial. Reference~\cite{Erdelyi2018} proves this bound directly for arbitrary complex trigonometric polynomials; its sharper $T_{2N}(\sec(s/4))$ form applies to even trigonometric polynomials.
\end{theorem}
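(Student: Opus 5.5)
The plan is to reduce the statement to the case of real even trigonometric polynomials. That case I would handle with a Chebyshev comparison argument in the variable $y=\cos(\theta/2)$. Throughout, write $G$ for the set of measure at least $2\pi-s$ on which $|Q|\le1$.

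\emph{Step 1 (complex to real).} Let $\theta_0$ be a maximizer of $|Q|$, and translate so that $\theta_0=0$. Translation preserves the degree and the measure of $G$. Choose a unimodular constant $c$ with $cQ(0)=|Q(0)|$, and set $R=\operatorname{Re}(cQ)$. Then $R$ is a real trigonometric polynomial of degree at most $N$. It satisfies $|R|\le|Q|\le1$ on $G$ and $R(0)=\max_\theta|Q(\theta)|$. So it suffices to bound $R(0)$.

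\emph{Step 2 (symmetrization, costing $s\mapsto 2s$).} Put $E(\theta)=\tfrac12[R(\theta)+R(-\theta)]$. This is an even real trigonometric polynomial of degree at most $N$ with $E(0)=R(0)$. It obeys $|E|\le1$ on $G\cap(-G)$, whose complement has measure at most $2s$. Hence the general bound $T_{2N}(\sec(s/2))$ follows from the even-case bound $T_{2N}(\sec(s'/4))$ applied with $s'=2s$. This is exactly how the two forms quoted in the statement are related. The only caveat is that $s'<2\pi$ is needed, which holds since $s<\pi$.

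\emph{Step 3 (even case).} Write $E(\theta)=p(\cos\theta)$ with $p$ an algebraic polynomial of degree at most $N$. Then $E(\theta)=q(y)$ with $y=\cos(\theta/2)$, where $q(y)=p(2y^2-1)$ is an even algebraic polynomial of degree at most $2N$. The candidate extremal function is
\begin{equation}
E^\ast(\theta)=T_{2N}\!\left(\frac{\cos(\theta/2)}{\cos(s'/4)}\right).
\end{equation}
It is an even trigonometric polynomial of degree $N$ in $\theta$, because it involves only even powers of $\cos(\theta/2)$. It satisfies $|E^\ast|\le1$ exactly outside the arc $|\theta|<s'/2$, which has measure $s'$, and $E^\ast(0)=T_{2N}(\sec(s'/4))$.

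To show that $E(0)\le E^\ast(0)$ whenever $|E|\le1$ off a set of measure $s'$, I would use the standard Remez two-step argument. First, a rearrangement or continuity argument shows that the worst exceptional set can be taken to be a single arc centred at the maximizer. Second, once the exceptional set is that arc, an equioscillation comparison applies. If $E(0)>E^\ast(0)$, then $E^\ast-\lambda E$ for a suitable $\lambda<1$ has too many sign changes among the $2N+1$ alternation points of $E^\ast$ in $y\in[-\cos(s'/4),\cos(s'/4)]$ together with $y=1$. This contradicts the degree $2N$ of $q$.

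\emph{Main obstacle.} The delicate part is the first half of Step 3: showing that a general measurable exceptional set of measure $s'$ is no worse than a single arc at the maximizer. The map $\theta\mapsto y$ distorts measure, so a naive pullback of the algebraic Remez inequality does not directly give $\sec(s'/4)$. My first attempt would be Erdélyi's approach: treat $E(\theta)$ on $[0,\pi]$ after the half-angle substitution, and use a decreasing-rearrangement argument showing that moving the bad set toward $\theta=0$ only increases the extremal value. If that fails, the fallback is simply to cite \cite{Erdelyi2018} for the even case, since the theorem is quoted from there. For the application in the Letter, only the qualitative consequence $\tfrac1N\log T_{2N}(\sec(s/2))\to0$ as $s\to0$ is used.
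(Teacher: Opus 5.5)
Your Steps 1 and 2 are correct and form the standard reduction. Moving a maximizer to $0$, rotating the phase, taking the real part and then symmetrizing gives an even real $E$ of degree at most $N$. It satisfies $E(0)=\max|Q|$ and has exceptional measure at most $2s$. This is why the stated $T_{2N}(\sec(s/2))$ follows from the even-case bound $T_{2N}(\sec(s'/4))$ with $s'=2s<2\pi$. Your extremal $T_{2N}(\cos(\theta/2)/\cos(s'/4))$ is also the right one. The paper proves none of this; it simply imports the inequality from \cite{Erdelyi2018}.

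The genuine gap is Step 3, and it is the whole theorem rather than a corner of it. Once the exceptional set is known to lie inside the arc $|\theta|<s'/2$, your sign-change argument is only Chebyshev's extremal property for a polynomial $q$ of degree at most $2N$ bounded by $1$ on $[-\cos(s'/4),\cos(s'/4)]$. All the content lies in the claim that an arbitrary measurable exceptional set of measure $s'$ is no worse than that arc, and you do not prove it. A ``decreasing rearrangement'' is not by itself an argument, since a rearranged $E$ is no longer a trigonometric polynomial of degree $N$.

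The mechanism behind the algebraic Remez inequality also fails factor by factor here. That mechanism pulls the alternation nodes $\eta_j\in[s'/2,\pi]$ of the extremal back into the good set via $\phi(\theta)=\pi-m(\{|E|\le1\}\cap[\theta,\pi])$ and compares Lagrange weights at $\theta=0$. The pulled-back nodes satisfy $\theta_j\le\eta_j$ and $|\theta_i-\theta_j|\ge|\eta_i-\eta_j|$, which helps the numerators $1-\cos\theta_i$. But the denominators $\cos\theta_j-\cos\theta_i=2\sin\frac{\theta_i+\theta_j}{2}\sin\frac{\theta_i-\theta_j}{2}$ contain $\sin\frac{\theta_i+\theta_j}{2}$, which can shrink. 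This is the measure distortion you noticed, and it is why the sharp even case is a separate theorem. As written, your proposal establishes the statement only modulo Erd\'elyi's even case, so it rests on \cite{Erdelyi2018} just as the paper does. It does usefully make explicit how the $\sec(s/2)$ form follows from the $\sec(s/4)$ form.

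If you want a fully proved inequality that still serves the Letter, there is an elementary substitute after Steps 1--2. Write $E(\theta)=p(\cos\theta)$ with $p$ real of degree at most $N$. By evenness, the exceptional set of $E$ meets $(0,\pi)$ in a set $B$ of measure at most $s$. Hence $\{x\in[-1,1]:|p(x)|>1\}=\cos B$ has Lebesgue measure $\int_B\sin\theta\,d\theta\le2\sin(s/2)$, and the classical algebraic Remez inequality yields
\[
\max_\theta|Q(\theta)|\le T_N\!\left(\frac{1+\sin(s/2)}{1-\sin(s/2)}\right).
\]
This is weaker than the stated bound, because $T_{2N}(\sec(s/2))=T_N\bigl(\frac{1+\sin^2(s/2)}{1-\sin^2(s/2)}\bigr)$. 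Still, $\frac1N\log$ of it is at most $\operatorname{arcosh}\bigl(\frac{1+\sin(s/2)}{1-\sin(s/2)}\bigr)\to0$ as $s\to0$, which is all that the uniformization lemma uses.
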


\begin{lemma}[Pointwise-to-minimax uniformization]
If
\begin{equation}
\inf_\theta\liminf_{n\to\infty}-\frac1n\log\eps_n(\theta)>b,
\end{equation}
then
\begin{equation}
\liminf_{n\to\infty}-\frac1n\log\sup_\theta\eps_n(\theta)\ge b.
\end{equation}
\end{lemma}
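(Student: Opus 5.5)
The plan follows the route sketched in the Letter: first show the good set $G_n=\{\theta:\eps_n(\theta)\le\e^{-nb}\}$ has asymptotically full measure, then apply the trigonometric Remez bound to the rescaled profile $Q_n=\e^{nb}\eps_n$, and finally check that the Chebyshev penalty is subexponential.

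\emph{Step 1 (full measure of the good set).} By hypothesis, every fixed $\theta$ satisfies $\liminf_n -\frac1n\log\eps_n(\theta)>b$. Hence $\eps_n(\theta)\le\e^{-nb}$ for all sufficiently large $n$, so the indicator $\mathbf 1_{G_n}(\theta)\to1$ pointwise on $[0,2\pi)$. Each $G_n$ is measurable, since $\eps_n$ is continuous by the Fourier bandwidth lemma. Dominated convergence with dominating function $1$ then gives $|G_n|\to2\pi$. Set $s_n=2\pi-|G_n|\to0$. For the Remez theorem we need $0<s<\pi$. If $s_n=0$, then $\eps_n\le\e^{-nb}$ almost everywhere and hence everywhere by continuity, so there is nothing to prove. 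Otherwise we may replace $s_n$ by $\max(s_n,1/n)$, which still tends to $0$ and lies in $(0,\pi)$ for large $n$.

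\emph{Step 2 (Remez).} The Fourier bandwidth lemma says $Q_n$ is a real trigonometric polynomial of degree at most $n$, and $|Q_n|\le1$ on $G_n$ because $\eps_n\ge0$. The trigonometric Remez bound then yields
\begin{equation}
\sup_\theta\eps_n(\theta)\le\e^{-nb}\,T_{2n}\!\left(\sec\frac{s_n}{2}\right).
\end{equation}

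\emph{Step 3 (subexponential penalty).} For $x\ge1$ we have $T_k(x)=\cosh(k\operatorname{arcosh}x)\le\e^{k\operatorname{arcosh}x}$. Therefore
\begin{equation}
\frac1n\log T_{2n}\!\left(\sec\frac{s_n}{2}\right)\le2\operatorname{arcosh}\!\left(\sec\frac{s_n}{2}\right),
\end{equation}
and the right-hand side tends to $\operatorname{arcosh}(1)=0$ as $s_n\to0$. Taking $-\frac1n\log$ of the Step 2 bound and then the $\liminf$ gives $\liminf_n-\frac1n\log\sup_\theta\eps_n(\theta)\ge b$.

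The main obstacle is Step 1. The pointwise hypothesis gives no uniformity in $\theta$, so the worst phase can drift with $n$; only the measure-theoretic statement $|G_n|\to2\pi$ survives, and the Remez step is what converts it into a sup bound. The degree bound is essential here: the degree is linear in $n$ while $s_n=o(1)$, so the penalty is only $\e^{o(n)}$. One should also confirm that the strict inequality in the hypothesis is used exactly where it is needed: it guarantees that each fixed $\theta$ eventually satisfies $\eps_n(\theta)\le\e^{-nb}$, rather than $\e^{-n(b-\delta)}$ for some $\delta>0$.
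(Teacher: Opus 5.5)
Your proposal is correct and follows the paper's proof step for step: pointwise eventual membership in $G_n$ plus dominated convergence gives $|G_n|\to2\pi$, the trigonometric Remez bound applied to $\e^{nb}\eps_n$ gives the $T_{2n}(\sec(s_n/2))$ amplification, and $T_k(x)\le\e^{k\operatorname{arcosh}x}$ makes that amplification $\e^{o(n)}$. Your extra bookkeeping fills in details the paper leaves implicit and changes nothing in the argument: measurability of $G_n$ via continuity, and enlarging $s_n$ to $\max(s_n,1/n)$ so that the Remez hypothesis $0<s<\pi$ holds.
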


\begin{proof}
Define
\begin{equation}
G_n=\{\theta:\eps_n(\theta)\le\e^{-nb}\}.
\end{equation}
The premise implies that every fixed $\theta$ belongs to $G_n$ for all sufficiently large $n$. Dominated convergence therefore gives $|G_n|\to2\pi$. Let $s_n=2\pi-|G_n|\to0$. Applying the Remez bound to $Q_n(\theta)=\e^{nb}\eps_n(\theta)$ yields
\begin{equation}
\sup_\theta\eps_n(\theta)
\le\e^{-nb}T_{2n}\!\left(\sec\frac{s_n}{2}\right).
\end{equation}
Since
\begin{equation}
\frac1n\log T_{2n}\!\left(\sec\frac{s_n}{2}\right)
\le2\operatorname{arcosh}\!\left(\sec\frac{s_n}{2}\right)\to0,
\end{equation}
the amplification is $\e^{o(n)}$, which proves the lemma.
\end{proof}

\section{Pointwise no-go}
\begin{theorem}[Pointwise no-go]
For every phase-independent GPO sequence with target rate $r>0$,
\begin{equation}
\inf_{\theta\in[0,2\pi)}
\liminf_{n\to\infty}-\frac1n\log\eps_n(\theta)
\le B_{\TO}^{\aware}(\rho;r).
\end{equation}
\end{theorem}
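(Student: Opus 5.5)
The plan is to argue by contradiction, combining the Pointwise-to-minimax uniformization lemma with the Robust reliability corollary. Suppose some phase-independent GPO sequence $\{(M_n,w_n)\}$ with $\liminf_n w_n/n\ge r$ satisfies
\begin{equation*}
\inf_{\theta}\liminf_{n\to\infty}-\frac1n\log\eps_n(\theta)>B_{\TO}^{\aware}(\rho;r).
\end{equation*}
If $B_{\TO}^{\aware}(\rho;r)=+\infty$ there is nothing to prove, so we may assume it is finite. We then fix $b$ strictly between the two sides. The first step is to confirm that the uniformization lemma applies to this sequence. Its only structural input is the Linear Fourier bandwidth lemma: each $\eps_n$ is a real nonnegative trigonometric polynomial of degree at most $n$. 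This holds for any effect $M_n$, so in particular for the effects of the given sequence, including adaptive or ancilla-assisted implementations, since these reduce to a single success effect.

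The second step applies the uniformization lemma to obtain
\begin{equation*}
\liminf_{n}-\frac1n\log\sup_\theta\eps_n(\theta)\ge b.
\end{equation*}
Since the sequence is feasible, with $\Tr(\tau_nM_n)\le\e^{-w_n}$ and $\liminf_n w_n/n\ge r$, it is an admissible competitor in the definition of $B_{\GPO}^{\orb}(\mathcal O_\rho;r)$. Hence $B_{\GPO}^{\orb}(\mathcal O_\rho;r)\ge b$. The Robust reliability corollary identifies this quantity with $B_{\TO}^{\aware}(\rho;r)$, which is $<b$. This is the desired contradiction.

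The main obstacle, and the point I would check carefully, is a detail inside the uniformization lemma rather than the top-level logic. One must pass from ``every fixed $\theta$ eventually lies in $G_n$'' to $|G_n|\to2\pi$. I would justify this by applying Fatou or dominated convergence to the indicator functions $\mathbf 1_{G_n}$. This requires measurability of $G_n$, which holds because $\eps_n$ is continuous. A second check is that the Remez hypothesis $0<s_n<\pi$ holds for large $n$. If $s_n=0$, continuity gives $\eps_n\le\e^{-nb}$ everywhere directly. A third check is that $\frac1n\log T_{2n}(x)\le 2\operatorname{arcosh}x$ for $x\ge1$, which follows from $T_{2n}(\cosh t)=\cosh(2nt)\le\e^{2nt}$. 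With these points settled, the contradiction argument above completes the proof.
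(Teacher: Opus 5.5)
Your proposal is correct and follows essentially the same route as the paper: argue by contradiction, choose $b$ strictly between the pointwise infimum and $B_{\TO}^{\aware}(\rho;r)$, use the Fourier-bandwidth and Remez uniformization lemmas to force the robust exponent of the same sequence to be at least $b$, and contradict the robust-reliability corollary. Your additional checks are sound and make explicit details the paper leaves implicit: measurability with dominated convergence for $|G_n|\to2\pi$, the degenerate case $s_n=0$ handled via continuity, and the Chebyshev growth bound.
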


\begin{proof}
If the pointwise infimum were strictly above the TO exponent, choose an intermediate $b$. The Fourier-bandwidth and Remez lemma would force the robust exponent of the same protocol to be at least $b$, contradicting the exact robust boundary.
\end{proof}

\begin{corollary}
Whenever
\begin{equation}
B_{\GPO}^{\aware}(\rho;r)>B_{\TO}^{\aware}(\rho;r),
\end{equation}
no phase-independent GPO protocol attains the state-aware GPO exponent at every phase. More precisely, for every
\begin{equation}
0<\delta<B_{\GPO}^{\aware}-B_{\TO}^{\aware},
\end{equation}
there exists a fixed phase $\theta_\delta$ such that
\begin{equation}
b(\rho_{\theta_\delta})<B_{\TO}^{\aware}+\delta
<B_{\GPO}^{\aware}.
\end{equation}
\end{corollary}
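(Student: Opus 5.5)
The corollary should follow almost directly from the pointwise no-go theorem. Fix a phase-independent GPO sequence $\{(M_n,w_n)\}$ with $\liminf_n w_n/n\ge r>0$, and assume the strict gap $B_{\GPO}^{\aware}(\rho;r)>B_{\TO}^{\aware}(\rho;r)$. A preliminary remark: if $B_{\TO}^{\aware}(\rho;r)=+\infty$, no strict gap is possible. We may therefore treat $B_{\TO}^{\aware}$ as finite, so that $B_{\TO}^{\aware}+\delta$ is a finite real number.

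For the "more precisely" part, take any $\delta$ with $0<\delta<B_{\GPO}^{\aware}-B_{\TO}^{\aware}$. When $B_{\GPO}^{\aware}=+\infty$, as in the pure-qubit example, this just means any $\delta>0$. The pointwise no-go theorem gives
\begin{equation*}
\inf_{\theta}b(\rho_\theta)\le B_{\TO}^{\aware}(\rho;r)<B_{\TO}^{\aware}(\rho;r)+\delta .
\end{equation*}
By the definition of an infimum, a value strictly above it is not a lower bound on $\{b(\rho_\theta)\}$. So there is a phase $\theta_\delta\in[0,2\pi)$ with $b(\rho_{\theta_\delta})<B_{\TO}^{\aware}+\delta$. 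The choice of $\delta$ then gives $B_{\TO}^{\aware}+\delta<B_{\GPO}^{\aware}$, which is exactly the displayed chain.

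The qualitative statement follows. Orbit states are unitarily equivalent to $\rho$, so $B_{\GPO}^{\aware}(\rho_\theta;r)=B_{\GPO}^{\aware}(\rho;r)$ for every $\theta$. Attaining the state-aware GPO exponent at every phase would mean $b(\rho_\theta)\ge B_{\GPO}^{\aware}$ for all $\theta$. That is contradicted at $\theta_\delta$.

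Since the pointwise theorem is taken as given, there is no real obstacle here. The only care needed is with the extended-real conventions: $b(\rho_\theta)$ may equal $+\infty$ at some phases, $B_{\GPO}^{\aware}$ may be infinite, and the infimum must be read in $[0,+\infty]$. The infimum-approximation step has to produce a genuine fixed phase, not a blocklength-dependent one. It does, because the infimum is taken over the pointwise exponents $b(\rho_\theta)$, each of which is already an asymptotic quantity for a fixed $\theta$.
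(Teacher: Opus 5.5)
Your argument is correct and follows the paper's intended route. The paper gives no separate proof of this corollary. It treats it as an immediate consequence of the pointwise no-go theorem, via $\inf_\theta b(\rho_\theta)\le B_{\TO}^{\aware}<B_{\TO}^{\aware}+\delta$ and the approximation property of the infimum, with the restriction of $\delta$ to below the gap supplying the second inequality. Your explicit handling of the extended-real cases (finiteness of $B_{\TO}^{\aware}$ forced by the strict gap, possibly infinite $B_{\GPO}^{\aware}$ and $b(\rho_\theta)$) makes precise what the paper leaves implicit.
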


\section{Mixed coherent qubits}
\begin{proposition}
If $\rho>0$, $[\rho,\tau]\ne0$, and $0<r<D(\rho\Vert\tau)$, then
\begin{equation}
B_{\GPO}^{\aware}(\rho;r)>B_{\TO}^{\aware}(\rho;r)>0,
\end{equation}
and both exponents are finite.
\end{proposition}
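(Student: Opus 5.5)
The plan is to treat both exponents through the common variational form in Eq.~\eqref{eq:aware}. Write
\[
f(\alpha)=\frac{1-\alpha}{\alpha}\bigl[\barD_\alpha(\rho\Vert\tau)-r\bigr],\qquad
\tilde f(\alpha)=\frac{1-\alpha}{\alpha}\bigl[\sandD_\alpha(\rho\Vert\tau)-r\bigr],
\]
so that $B_{\GPO}^{\aware}=\sup_{0<\alpha<1}f$ and $B_{\TO}^{\aware}=\sup_{0<\alpha<1}\tilde f$. Three ingredients will be used. First, the Araki--Lieb--Thirring ordering $\sandD_\alpha\le\barD_\alpha$ for $\alpha\in(0,1)$. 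Second, the continuity of both families in $\alpha$, with common limit $D(\rho\Vert\tau)$ as $\alpha\to1$. Third, the endpoint behaviour at $\alpha\to0$ for full-rank $\rho$.

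\emph{Finiteness and positivity.} Since $\rho>0$, the Petz quantity satisfies $\barD_\alpha(\rho\Vert\tau)\to -\log\Tr(\Pi_\rho\tau)=-\log\Tr\tau=0$ as $\alpha\to0$. By the ordering, also $\limsup_{\alpha\to0}\sandD_\alpha\le0$. Hence both brackets tend to a value at most $-r<0$, while the prefactor $(1-\alpha)/\alpha$ diverges, so $f,\tilde f\to-\infty$ at $\alpha\to0$. At $\alpha\to1$ the prefactor vanishes and the brackets stay bounded, so $f,\tilde f\to0$. Both functions are continuous on $(0,1)$, so both suprema are finite. For positivity, $r<D(\rho\Vert\tau)=\lim_{\alpha\to1}\sandD_\alpha$ implies $\sandD_\alpha>r$ for $\alpha$ close to $1$. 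Thus $\tilde f>0$ there, giving $B_{\TO}^{\aware}>0$.

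\emph{Strict gap.} Because $\tilde f\to-\infty$ at $0$, $\tilde f\to0$ at $1$, and $\tilde f$ is positive somewhere, the supremum of $\tilde f$ is attained at some interior $\alpha^*\in(0,1)$. It then suffices to prove the strict inequality $\barD_{\alpha^*}(\rho\Vert\tau)>\sandD_{\alpha^*}(\rho\Vert\tau)$. Indeed, that gives
\[
B_{\GPO}^{\aware}\ge f(\alpha^*)>\tilde f(\alpha^*)=B_{\TO}^{\aware}.
\]
In terms of the quasi-entropies, the required inequality is $\Tr\bigl(\tau^{\frac{1-\alpha}{2\alpha}}\rho\,\tau^{\frac{1-\alpha}{2\alpha}}\bigr)^{\alpha}>\Tr\rho^\alpha\tau^{1-\alpha}$ for $\alpha\in(0,1)$. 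The sign comes from the factor $1/(\alpha-1)<0$ in the definitions of the divergences.

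\emph{Main obstacle.} The step I expect to be hardest is the equality case of the Araki--Lieb--Thirring inequality: equality should hold only if $[\rho,\tau]=0$. I would first try to invoke the known strictness result for the ALT/Lieb--Thirring trace inequality in the non-commuting case (Hiai's characterization of equality). Failing that, the qubit setting allows a direct verification. Write $\tau=\mathrm{diag}(t,1-t)$ and $\rho$ in Bloch form with a nonzero off-diagonal element $c$. The sandwiched quantity is then the trace of a power of an explicit positive $2\times2$ matrix, and the Petz quantity is an explicit expression in the eigen-decomposition of $\rho$. One would show that their difference vanishes exactly when $c=0$, for example via strict concavity or convexity of $x\mapsto x^\alpha$ applied to non-identical eigenvalue distributions. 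With this strictness in hand, the three steps above combine to give the proposition.
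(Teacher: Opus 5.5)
Your proposal is correct and follows essentially the same route as the paper. Both arguments send the variational objectives to $-\infty$ as $\alpha\downarrow0$ (using full rank of $\rho$) and to $0$ as $\alpha\uparrow1$, use $r<D(\rho\Vert\tau)$ to make them positive in between and hence obtain finite interior optima, and invoke strictness of the Araki--Lieb--Thirring inequality for noncommuting $\rho,\tau$ to get the strict gap; you are only more explicit about evaluating the Petz objective at the TO maximizer $\alpha^*$ and citing Hiai's equality characterization for that strictness step.
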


\begin{proof}
For full-rank $\rho$, both zero-order R\'enyi limits vanish. The variational objectives tend to $-\infty$ as $\alpha\downarrow0$, tend to zero as $\alpha\uparrow1$, and are positive at an interior point because $r<D(\rho\Vert\tau)$. Thus both optima occur in the interior. The Araki--Lieb--Thirring inequality is strict for noncommuting $\rho$ and $\tau$, which yields the strict finite GPO--TO gap.
\end{proof}

\section{Pure-qubit analytic solution}
\subsection{General orbit and GPO zero-error threshold}
Let
\begin{equation}
\tau=t_0\proj0+t_1\proj1,\qquad
\ket{\psi_\theta}=\sqrt{1-p}\ket0+\e^{i\theta}\sqrt p\ket1.
\end{equation}
For the pure state,
\begin{equation}
D_{\min}(\psi\Vert\tau)
=-\log[(1-p)t_0+pt_1].
\end{equation}
Whenever $r<D_{\min}$, the state projector has Gibbs cost below $\e^{-nr}$ and yields exact GPO success. Hence
\begin{equation}
B_{\GPO}^{\aware}(\psi;r)=\infty.
\end{equation}

\subsection{Dicke sectors}
The $n$-copy state decomposes as
\begin{equation}
\ket{\psi_\theta}^{\otimes n}
=\sum_{N=0}^n\e^{iN\theta}\sqrt{P_N}\ket{D_N^{(n)}},
\qquad P_N=\binom nN(1-p)^{n-N}p^N.
\end{equation}
After total-energy pinching,
\begin{equation}
\Omega_{p,n}=\sum_{N=0}^nP_N\proj{D_N^{(n)}}.
\end{equation}
The Gibbs weight of the normalized Dicke direction is
\begin{equation}
Q_N=t_0^{n-N}t_1^N.
\end{equation}
There is no binomial factor in $Q_N$ because the test accepts one normalized direction, not the entire energy shell.

\subsection{Finite-blocklength optimization}
The optimal effect has the form
\begin{equation}
M_n=\sum_{N=0}^nx_N\proj{D_N^{(n)}},\qquad0\le x_N\le1,
\end{equation}
and solves
\begin{equation}
\max_x\sum_NP_Nx_N
\quad\text{subject to}\quad
\sum_NQ_Nx_N\le\e^{-nr}.
\end{equation}
This is a fractional-knapsack/Neyman--Pearson problem: accept sectors in decreasing order of $P_N/Q_N$, with at most one fractionally accepted boundary sector.

\subsection{Large-deviation exponent}
With $N/n\to q$,
\begin{equation}
P_N=\exp[-nD(q\Vert p)+O(\log n)],
\qquad Q_N=\e^{-nc(q)},
\end{equation}
where
\begin{equation}
c(q)=-(1-q)\log t_0-q\log t_1.
\end{equation}
Therefore
\begin{equation}
B_{\TO}^{\aware}(\psi;r)
=\inf_{q:c(q)\le r}D(q\Vert p).
\end{equation}
For $c_0<r<c(p)$,
\begin{equation}
q_r=\frac{r+\log t_0}{\log(t_0/t_1)},
\qquad B_{\TO}^{\aware}(\psi;r)=D(q_r\Vert p).
\end{equation}
The complete open-interval structure is
\begin{equation}
B_{\TO}^{\aware}(\psi;r)=
\begin{cases}
\infty,&0<r<-\log t_0,\\
D(q_r\Vert p),&-\log t_0<r<D(\psi\Vert\tau),\\
0,&r\ge D(\psi\Vert\tau).
\end{cases}
\end{equation}

\subsection{Explicit pure-qubit separation}
Take
\begin{equation}
\tau=\operatorname{diag}(0.8,0.2),\qquad
\ket{\psi_\theta}=\frac{\ket0+\e^{i\theta}\ket1}{\sqrt2},
\qquad r=0.5.
\end{equation}
Then
\begin{equation}
D_{\min}=\log2>0.5,
\end{equation}
so the known-phase GPO error is exactly zero. Meanwhile,
\begin{equation}
q_r=0.19970971277855967790\ldots,
\end{equation}
and
\begin{equation}
B_{\TO}^{\aware}=D(q_r\Vert1/2)
=0.19314744398892003976\ldots.
\end{equation}
Thus
\begin{equation}
B_{\GPO}^{\aware}=\infty,
\qquad B_{\GPO}^{\orb}=B_{\TO}^{\aware}
=0.193147443988920\ldots.
\end{equation}

\subsection{Finite-blocklength values}
The exact fractional-knapsack values are listed below.
\begin{table}[H]
\centering
\begin{tabular}{rcc}
\toprule
$n$ & $\eps_n^*$ & $-\log(\eps_n^*)/n$\\
\midrule
5 & $8.281517291279069\times10^{-2}$ & 0.4982287973279442\\
10 & $2.521319888781785\times10^{-2}$ & 0.3680387656192378\\
20 & $2.791259586817086\times10^{-3}$ & 0.2940631160109723\\
50 & $5.662818071729281\times10^{-6}$ & 0.2416317779424973\\
100 & $2.587147740495017\times10^{-10}$ & 0.2207529491955584\\
200 & $7.345346108718376\times10^{-19}$ & 0.2087752491834705\\
500 & $2.749802857904568\times10^{-44}$ & 0.2006044297413279\\
1000 & $1.986950864522377\times10^{-86}$ & 0.1973357167626979\\
$\infty$ & --- & $0.1931474439889200\ldots$\\
\bottomrule
\end{tabular}
\end{table}
The convergence from above contains the usual $O(\log n/n)$ type and boundary-sector corrections.

\section{Conditional integer-charge extension}
Let
\begin{equation}
K=\sum_jk_j\Pi_j,\qquad k_j\in\mathbb Z,
\qquad \Delta k=\max_jk_j-\min_jk_j.
\end{equation}
The $n$-copy error then has Fourier degree at most $n\Delta k$, and
\begin{equation}
\frac1n\log T_{2n\Delta k}\!\left(\sec\frac{s_n}{2}\right)
\le2\Delta k\operatorname{arcosh}\!\left(\sec\frac{s_n}{2}\right)\to0.
\end{equation}
Whenever the group twirl coincides with thermal pinching, the exact orbit collapse and pointwise no-go follow in the same form.

\end{document}